\documentclass{snapshotmfo}
\linespread{1.5}
%%% Commands for editing
% Uncomment for collaborative editing:
%\usepackage{trackchanges}
%\addeditor{Johannes}
% Uncomment for proofreading:
%\overfullrule=5pt
%%%%%%%%%%%%%%%%%%%%%%%%%%%%%%%%%%%%%%%%%%%%%%%%%%%%%%%%%%%%%%%%%%%%%%%%%%%%%%%%%%%%%%%%%%%%%%%%%%%%%%%%%%

%%%%%%%%%%%%%%%% PLEASE FILL OUT THE FOLLOWING ITEMS %%%%%%%%%%%%%%%%%%%%%%%%%%%%%%%%%%%%%%%%%%%%%%%%%%%%%

%%% Optional recommended packages
%%% Encoding
\usepackage[utf8]{inputenc}
\usepackage[dvipsnames]{xcolor}
\usepackage{mathtools}
\usepackage{enumitem}
\usepackage{amsmath,amssymb}
\usepackage{amsthm}
\usepackage{tcolorbox}
\usepackage{epigraph}
\usepackage{hyperref}
\hypersetup{
    colorlinks=true,
    linkcolor=Aquamarine,
    filecolor=Aquamarine,      
    urlcolor=Aquamarine,
    citecolor=Aquamarine,
}

% Please feel free to use your own macros but please do not change the layout of the snapshot.

% By now the documentclass supports the languages 'USenglish', 'ngerman' and 'spanish'.
% Please submit original snapshots in English or German. The support of Spanish is meant for translations.
% If you change the language setting, please delete the .aux file in advance or run your LaTeX compiler *twice*.
\usepackage[round]{natbib}
\usepackage[USenglish]{babel}
%\usepackage[ngerman]{babel}
%\usepackage[spanish]{babel}

% an improvement to \dots to be loaded after babel
\usepackage{ellipsis}

\bibliographystyle{plainnat}

%%% Please separate your names by \and if there are several authors.

\DeclareMathOperator\supp{supp}
\DeclareMathOperator\inter{int}

\newtheorem{theorem}{Theorem}[section]
\newtheorem{proposition}[theorem]{Proposition}

\newtheorem{remark}[theorem]{Remark}
\newtheorem{corollary}[theorem]{Corollary}

\newtheorem{definition}[theorem]{Definition}

\author{Mark Whitmeyer \and Cole Williams\thanks{MW: Arizona State University, \href{mailto:mark.whitmeyer@gmail.com}{mark.whitmeyer@gmail.com}. CW: Durham University \href{mailto:cole.randall.williams@gmail.com}{cole.randall.williams@gmail.com}. Draft date: \textcolor{OrangeRed}{\today}.}}
\title{Comparisons of Sequential Experiments for Additively Separable Problems}

\begin{document}
%%% Starting with version 1.29 the LaTeX package "bookmark" doesn't create the title bookmark any more.
% As a workaround, please uncomment and edit the following line:
%\pdfbookmark{Your title (without math mode, control sequences and special characters)}{snapshottitle}
% The second argument "snapshottitle" is just an identifier and can be left unchanged.

%%% Please insert your abstract here. 
\begin{abstract}
For three natural classes of dynamic decision problems; 1. additively separable problems, 2. discounted problems, and 
3. discounted problems for a fixed discount factor; we provide necessary and sufficient conditions for one sequential experiment to dominate another in the sense that the dominant experiment is preferred to the other for any decision problem in the specified class. We use these results to study the timing of information arrival in additively separable problems.
\end{abstract}

\section{Introduction}

Many decision problems of interest are dynamic: there is some (possibly infinite) time horizon and as time passes by the the agent obtains information through various sources and takes multiple actions at different instances. A natural question is how to rank information for the agent in such a dynamic setting. How should one rank dynamic information structures? \cite*{greenshtein1996comparison} answers this with aplomb (see also \cite*{de2018blackwell}): echoing \cite*{blackwell}'s seminal ranking, he provides a partial order over dynamic information structures in which one dominates another if and only an agent prefers the one to the other for any dynamic decision problem.

However, the class of dynamic decision problems is large and variegated and this is reflected by \citeauthor{greenshtein1996comparison}'s order. It has little bite and is also difficult to parse--for instance, it is difficult to summarize what precisely dominance in his order means in terms of the agent's sequence of beliefs. To sidestep these issues, in this paper we restrict attention to several important subclasses of dynamic decision problems. In each, we provide necessary and sufficient conditions for an agent to prefer one information structure to another, for any decision problem within the specified class. The conditions are simple, intuitive, and economically meaningful.

The first, and broadest subclass we consider are the (intertemporally) additively separable decision problems. These are those in which the agent's payoff in the \(T\)-period decision problem equals \(\sum_{t=1}^T u_t(a_t,\theta)\) for some collection of per-period utility functions \(u_t\). We find that within this class, one dynamic information structure, \(\pi_1\), is preferred to another, \(\pi_2\), if and only if at each period \(t\), the agent's period-\(t\) distribution over posteriors induced by \(\pi_1\) is a mean-preserving spread of that induced by \(\pi_2\).

We then specialize further, to (exponentially-)discounted decision problems for a fixed discount factor \(\delta\). These are those in which the agent's payoff in the \(T\)-period decision problem equals \(\sum_{t=1}^T \delta^{t-1}v(a_t,\theta)\) for some utility function \(v\). In this class, \(\pi_1\) is preferred to \(\pi_2\) if and only if the ``average'' distribution over posteriors induced by \(\pi_1\)--where the averaging is taken with respect to distribution induced by the weighted sums of the geometric series corresponding to \(\delta\)--is a mean-preserving spread of that induced by \(\pi_2\). 

An easy implication of this ranking is a comparison of dynamic experiments for discounted decision problems where the discount factor is not fixed. Quite simply, there, \(\pi_1\) is preferred to \(\pi_2\) if and only if the ``average'' distribution over posteriors induced by \(\pi_1\) is a mean-preserving spread of that induced by \(\pi_2\) for any ``averaging'' produced by some \(\delta\).

We finish by using these results to study an agent's preferences over the random arrival of information. Suppose there is a single static experiment, the realization of which arrives at a random time. Can we compare different arrival rates? Yes. In particular, in the class of additively separable problems, an agent prefers one arrival rate to another if and only if it is stochastically earlier: the distribution over arrival times preferred by an agent is first-order stochastically dominated by the less-preferred one. Notably, this means that in the broad class of additively separable problems, agents are neither risk-averse nor risk-loving with respect to the arrival time. On the other hand, we show that in the class of discounted decision problems, an agent is risk-loving with respect to the timing of the arrival of information: she prefers mean-preserving spreads of arrival times.

\section{Model}

An agent is faced with an \textcolor{PineGreen}{Additively-separable dynamic decision problem}, i.e., an \textcolor{PineGreen}{AS problem} or one in the \textcolor{PineGreen}{AS class}. There is a finite set of possible states of the world \(\Theta\). Time is discrete: there are \(T\) periods, where \(1 \leq T \leq \infty\), and each period is indexed by \(t \in \left\{1,\dots,T\right\} \eqqcolon \mathcal{T}\). Each period \(t\), the agent has access to a set of actions \(A_t\). We specify that for each \(t\), \(A_t\) is compact and denote \(A \coloneqq \times_{t=1}^T A_t\). The decision problem being in the AS class means that the agent's utility function \(u \colon A \times \Theta \to \mathbb{R}\) can be written
\[u(a,\theta) = \sum_{t=1}^T u_t(a_t,\theta)\text{,}\]
where for each \(t\) \(u_t\) is continuous and the period-\(t\) utilities are such that the sum is finite.

A notable subclass of AS problems are the \textcolor{PineGreen}{Discounted problems}. For a problem in the discounted class, each \(A_t\) equals some common set \(A^*\) and
\[u(a,\theta) = \sum_{t=1}^T \beta_t v(a_t,\theta)\text{,}\]
for some sequence \(\beta \coloneqq \left(\beta\right)_{t=1}^T\) for which \(\bar{\beta} \coloneqq \sum_{t=1}^T \beta_t\) is finite, and for some continuous utility function \(v \colon A^* \times \Theta \to \mathbb{R}\). A subclass of these problems are the \textcolor{PineGreen}{Exponentially discounted problems}, which are those for which \(\beta_t = \delta^{t-1}\) for each \(t\); for some \(\delta \in \left[0,1\right]\) if \(T \in \mathbb{N}\) or \(\delta \in \left[0,\bar{\delta}\right)\) (\(\delta \in \left[0,1\right)\)) if \(T = \infty\).

A subclass of the discounted problems are the \(\beta-\)\textcolor{PineGreen}{Discounted problems}, which are those for which 
\[u(a,\theta) = \sum_{t=1}^T \beta_t v(a_t,\theta)\text{,}\]
for a \textit{specified} sequence \(\beta\). A further subclass of these are the \(\delta-\)\textcolor{PineGreen}{Discounted problems}, where the exponential discount factor \(\delta\) is specified.

Each period the agent obtains information. Formally, there is a collection of compact sets of signal realizations \(\left\{S_1, \dots, S_T\right\}\). Then, letting \(S \coloneqq \times_{t=1}^T S_t\), a \textcolor{PineGreen}{Dynamic information structure} is a stochastic map \(\pi \colon \Theta \to \Delta (S)\).\footnote{Henceforth we say just \textcolor{PineGreen}{Information structure} or \textcolor{PineGreen}{Signal}.} Each period \(t\), the agent observes a realization \(s_t \in S_t\) according to \(\pi\) before choosing an action \(a_t \in A_t\).

We denote \(S^{t} \coloneqq S_1 \times \cdots \times S_t\) and define \(\pi_t \in \Delta\left(S^t\right)\) to be the marginal distribution over \(S^t\) induced by \(\pi\). A strategy for an agent is a collection of mappings \(\left(\alpha_t\right)_{t=1}^{T}\), where \(\alpha_t \colon S^t \to \Delta\left(A_t\right)\) for all \(t\). Note that in principle, an agent's strategy could also depend on her history of actions; however, the additive separability of the decision problem makes it without loss of optimality to forbid such dependence. The agent's (\textit{ex ante}) expected utility given prior \(\mu \in \inter \Delta\left(\Theta\right)\), information structure \(\pi\), and strategy \(\left(\alpha_t\right)_{t=1}^{T}\) is
\[\sum_{t=1}^T \sum_{\theta \in \Theta} \sum_{s^t \in S^t} \sum_{a_t \in A_t} \mu(\theta) \pi_t(s^t) \alpha_t (\left.a_t\right|s^t)u_t(a_t,\theta)\text{.}\]

For each \(t\), strategy \(\alpha_t\) is optimal if it maximizes
\[\sum_{\theta \in \Theta} \sum_{s^t \in S^t} \sum_{a_t \in A_t} \mu(\theta) \pi_t(s^t) \alpha_t (\left.a_t\right|s^t)u_t(a_t,\theta)\text{.}\]
By the compactness of each \(A_t\) and the continuity of each \(u_t\), a maximizer exists and we denote an arbitrary such maximizer \(\alpha^*_t\). Thus, the value of a signal, \(W(\pi)\), is
\[W(\pi) \coloneqq \sum_{t=1}^T \sum_{\theta \in \Theta} \sum_{s^t \in S^t} \sum_{a_t \in A_t} \mu(\theta) \pi_t(s^t) \alpha^{*}_t (\left.a_t\right|s^t)u_t(a_t,\theta)\text{.}\]

\begin{definition}
    Given a specified class of dynamic decision problem, \textcolor{PineGreen}{Signal \(\pi_1\) dominates signal \(\pi_2\)} if for any decision problem in the class \(W(\pi_1) \geq W(\pi_2)\). We write this \(\pi_1 \trianglerighteq \pi_2\).
\end{definition}

\subsection{Preliminaries}

Given full-support prior \(\mu \in \Delta\left(\Theta\right)\), any signal induces a sequence of of random vectors supported on a subset of \(\Delta\left(\Theta\right)\) with cumulative distribution functions (cdfs) \(F_1, \dots, F_T\), defined as follows. Formally, for each \(t\), \(F_t \in \Delta\left(\Delta\left(\Theta\right)\right)\) is the distribution over posteriors in period \(t\) induced by signal \(\pi^t\). Letting \(\succeq\) denote the static mean-preserving spread (MPS) relation (\(F \succeq G\) means \(F\) is an MPS of \(G\)), we have
\begin{remark}
    \(F_T \succeq F_{T-1} \succeq \dots \succeq F_1\).
\end{remark}
This is an immediate consequence of the fact that each period the agent is acquiring new information, refining her beliefs.

In each period \(t\), and for any belief \(x \in \Delta\left(\Theta\right)\) we define
\[V_t(x) \coloneqq \max_{a_t \in A_t} \mathbb{E}_x u_t(a_t,\theta)\text{.}\]
Consequently, we can rewrite the value of a signal in an AS problem as 
\[W(\pi) = \sum_{t=1}^T \mathbb{E}_{F_t} V_t(x)\text{.}\]
Moreover, in a \(\beta\)-discounted problem
\[W(\pi) = \sum_{t=1}^T \beta_t \mathbb{E}_{F_t} V(x)\text{,}\]
where \(V(x) \coloneqq \max_{a_t \in A^*} \mathbb{E}_x v(a_t,\theta)\).

\section{Results}

Our first result is that in the additively-separable class of dynamic decision problems, dominance is equivalent to dominance in the convex order of the distribution over posteriors induced by the signal, \textit{in each period}. Sequence \(\left(F_t\right)_{t=1}^T\) is that induced by \(\pi_1\) and \(\left(G_t\right)_{t=1}^T\) is that induced by \(\pi_2\). Then,
\begin{theorem}\label{theorem1}
    In the AS class, \(\pi_1 \trianglerighteq \pi_2\) if and only if \(F_t \succeq G_t\) for every \(t \in \mathcal{T}\).
\end{theorem}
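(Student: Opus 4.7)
The plan is to combine the decomposition \(W(\pi)=\sum_{t=1}^T \mathbb{E}_{F_t} V_t\) noted just before the theorem with the classical characterization of the convex (mean-preserving spread) order: among distributions on \(\Delta(\Theta)\) sharing a common mean, \(F \succeq G\) if and only if \(\mathbb{E}_F \phi \ge \mathbb{E}_G \phi\) for every continuous convex \(\phi\colon \Delta(\Theta)\to\mathbb{R}\). Because posteriors form a Bayesian martingale, both \(F_t\) and \(G_t\) have mean \(\mu\), so this characterization applies in every period.

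\textbf{Sufficiency.} For each \(t\), the value function \(V_t(x)=\max_{a_t\in A_t}\mathbb{E}_x u_t(a_t,\theta)\) is the maximum, over the compact set \(A_t\), of the family of affine (in \(x\)) functions \(x\mapsto\mathbb{E}_x u_t(a_t,\theta)\); it is therefore convex and continuous (e.g.\ by Berge's maximum theorem). If \(F_t\succeq G_t\) for every \(t\), applying the convex-order characterization with \(\phi=V_t\) gives \(\mathbb{E}_{F_t} V_t\ge \mathbb{E}_{G_t} V_t\) in each period; summing yields \(W(\pi_1)\ge W(\pi_2)\) in any AS problem.

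\textbf{Necessity.} Contrapositively, suppose \(F_{t^*}\not\succeq G_{t^*}\) for some \(t^*\in\mathcal{T}\). The convex-order characterization produces a continuous convex \(\phi\colon \Delta(\Theta)\to \mathbb{R}\) with \(\mathbb{E}_{F_{t^*}}\phi < \mathbb{E}_{G_{t^*}}\phi\). Since \(\phi\) is the pointwise upper envelope of its dominated affine functions (the closure identity for lower-semicontinuous proper convex functions), and since rational-coefficient affines are dense among these, I would enumerate a countable family \(\{\ell_i\}_{i\ge 1}\) of rational-coefficient affines satisfying \(\ell_i\le \phi\) and \(\sup_i \ell_i=\phi\) pointwise. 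The increasing sequence \(\phi_k\coloneqq \max_{i\le k}\ell_i\) then converges pointwise up to \(\phi\) on the compact set \(\Delta(\Theta)\); by Dini's theorem the convergence is uniform, so \(\mathbb{E}_{F_{t^*}}\phi_k\to \mathbb{E}_{F_{t^*}}\phi\) and analogously for \(G_{t^*}\). Fix \(k\) large enough that \(\mathbb{E}_{F_{t^*}}\phi_k < \mathbb{E}_{G_{t^*}}\phi_k\).

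To obtain the contradiction, I would realize \(\phi_k\) as a period-\(t^*\) value function. Writing \(\ell_i(x)=\sum_\theta d^{(i)}_\theta x(\theta)\)---possible because affines on \(\Delta(\Theta)\) are precisely expectations of real functions of \(\theta\)---set \(A_{t^*}\coloneqq\{1,\dots,k\}\) and \(u_{t^*}(i,\theta)\coloneqq d^{(i)}_\theta\); for every other period take \(u_t\equiv 0\) with any compact \(A_t\). This is a valid AS problem whose value function is \(\phi_k\) at \(t^*\) and zero otherwise, so \(W(\pi_1)=\mathbb{E}_{F_{t^*}}\phi_k<\mathbb{E}_{G_{t^*}}\phi_k=W(\pi_2)\), contradicting \(\pi_1\trianglerighteq \pi_2\). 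The main obstacle is this piecewise-linear approximation step; the Dini argument based on countably many rational-coefficient dominated affines sidesteps it cleanly even when \(\phi\) has unbounded subgradients near the boundary of the simplex, and once \(\phi_k\) is piecewise linear the corresponding decision problem has a finite (hence compact) action set for which continuity of \(u_{t^*}\) is automatic.
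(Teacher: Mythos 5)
Your proof is correct and takes essentially the same route as the paper's: the decomposition \(W(\pi)=\sum_{t=1}^T \mathbb{E}_{F_t}V_t\), convexity of each value function for sufficiency, and a decision problem that is trivial outside period \(t^*\) for necessity. The only difference is that you carefully justify the step the paper merely asserts—that failure of \(F_{t'}\succeq G_{t'}\) yields an actual value function (a maximum of finitely many affine functions, realizable with a finite action set) witnessing the strict inequality—via the rational-coefficient-affine and Dini approximation argument.
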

\begin{proof}
    %Fix arbitrary \(T\) and \(\mu \in \Delta\left(\Theta\right)\). By definition, \[W(\pi_1) \geq W(\pi_2) \Leftrightarrow \sum_{t=1}^T \mathbb{E}_{F_t} V_t(x) \geq \sum_{t=1}^T \mathbb{E}_{G_t} V_t(x)\text{.}\]
    \(\left(\Leftarrow\right)\) If \(F_t \succeq G_t\) for each \(t\), \(\mathbb{E}_{F_t} V_t(x) \geq \mathbb{E}_{G_t} V_t(x)\) for each \(t\), which implies \(W(\pi_1) \geq W(\pi_2)\).

    \medskip

    \noindent \(\left(\Rightarrow\right)\) Now suppose for the sake of contraposition that there exists a \(t' \in \left\{1,\dots,T\right\}\) for which \(F_{t'} \not\succeq G_{t'}\). This implies that there exists a \(V_{t'}\) for which \(\mathbb{E}_{F_{t'}} V_{t'}(x) < \mathbb{E}_{G_{t'}} V_{t'}(x)\). Then, take a decision problem with \(V_t = 0\) for all \(t \neq t'\), so
    \[W(\pi_1) = \sum_{t=1}^T \mathbb{E}_{F_t} V_t(x) = \mathbb{E}_{F_{t'}} V_{t'}(x) < \mathbb{E}_{G_{t'}} V_{t'}(x) = \sum_{t=1}^T \mathbb{E}_{G_t} V_t(x) = W(\pi_2)\text{,}\]
    and so \(\pi_1\) does not dominate \(\pi_2\).\end{proof}
Now let us turn attention to \(\beta\)-discounted problems. A central object in ranking signals will be the following distribution. For \(t \in \mathcal{T}\), we construct the  probability mass function 
\[\lambda_\beta(t) = \frac{\beta_t}{\bar{\beta}}\text{,}\]
which specializes to the following in \(\delta\)-discounted problems:
\[\lambda(t) = \begin{cases}
\delta^{t-1}\frac{1-\delta}{1-\delta^{T}} \quad &\text{if} \quad \delta \in \left[0,1\right)\\
\frac{1}{T} \quad &\text{if} \quad \delta = 1 \ \text{and} \ T \in \mathbb{N}\text{.}
\end{cases}\]
Define cdfs \[F^{\beta} \coloneqq \sum_{t=1}^{T}\lambda_\beta(t)F_t \quad \text{and} \quad G^{\beta} \coloneqq \sum_{t=1}^{T}\lambda_\beta(t)G_t\text{,}\]
recalling that the \(F_t\)s are produced by \(\pi_1\) and the \(G_t\)s are produced by \(\pi_2\).
\begin{theorem}\label{discountedtheorem}
    In the \(\beta-\)Discounted class, \(\pi_1 \trianglerighteq \pi_2\) if and only if \(F^{\beta} \succeq G^{\beta}\).
\end{theorem}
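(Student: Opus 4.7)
The plan is to reduce the $\beta$-discounted ranking to a single static Blackwell-style comparison between the mixture distributions $F^\beta$ and $G^\beta$, and then invoke the standard characterization of the convex (mean-preserving spread) order by integrals of convex functions.

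First I would rewrite the value formula using the preliminary identity $W(\pi) = \sum_{t=1}^T \beta_t \mathbb{E}_{F_t} V(x)$. Factoring out $\bar{\beta}$ and substituting $\beta_t/\bar{\beta} = \lambda_\beta(t)$, linearity of expectation and the definition of $F^\beta$ as a mixture give $W(\pi_1) = \bar{\beta}\, \mathbb{E}_{F^\beta} V(x)$ and similarly $W(\pi_2) = \bar{\beta}\, \mathbb{E}_{G^\beta} V(x)$. Hence $\pi_1 \trianglerighteq \pi_2$ in the $\beta$-discounted class is equivalent to the statement that $\mathbb{E}_{F^\beta} V \geq \mathbb{E}_{G^\beta} V$ for every function $V$ arising as the upper envelope of affine functionals $x \mapsto \mathbb{E}_x v(a,\theta)$ indexed by a compact $A^*$. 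Observe also that $F^\beta$ and $G^\beta$ share the common mean $\mu$, since each $F_t$ and $G_t$ is a Bayes-plausible distribution of posteriors with mean $\mu$ and mixtures preserve the mean, so the question is genuinely one of the MPS order.

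The $(\Leftarrow)$ direction is then immediate: any such $V$ is a supremum of affine functions, hence convex, and by definition $F^\beta \succeq G^\beta$ means the integral of every convex function is weakly larger under $F^\beta$.

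For $(\Rightarrow)$ I would proceed by contraposition. If $F^\beta \not\succeq G^\beta$, there exists a continuous convex $\phi\colon \Delta(\Theta) \to \mathbb{R}$ with $\mathbb{E}_{F^\beta}\phi < \mathbb{E}_{G^\beta}\phi$. Because $\Delta(\Theta)$ is a compact polytope, $\phi$ can be uniformly approximated on $\Delta(\Theta)$ by piecewise linear convex functions of the form $\tilde{V}(x) = \max_{1 \leq i \leq n} \langle c_i, x \rangle$ (e.g., by taking finitely many supporting hyperplanes of $\phi$); for a sufficiently fine approximation the strict inequality is preserved. Such a $\tilde{V}$ is realized as the value function $V$ of the $\beta$-discounted problem with $A^* = \{1,\dots,n\}$ and $v(i,\theta) = c_i(\theta)$, yielding $W(\pi_1) < W(\pi_2)$, which contradicts $\pi_1 \trianglerighteq \pi_2$. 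The main obstacle, and the only non-routine step, is this approximation: one must verify that strictness survives the passage from the abstract convex test function $\phi$ to a piecewise linear $\tilde V$ representable by a compact, continuous decision problem. This is a standard fact on compact polytopes but is the one place where the proof is genuinely more delicate than Theorem~\ref{theorem1}, which could use indicator-style test functions period by period without such an approximation.
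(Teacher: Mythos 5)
Your proposal is correct and follows essentially the same route as the paper: the heart of both arguments is the algebraic identity \(W(\pi) = \bar{\beta}\,\mathbb{E}_{F^{\beta}}V(x)\), obtained by factoring out \(\bar{\beta}\) and using linearity of expectation to collapse the discounted sum into the mixture distribution \(F^{\beta}\). The only difference is that you explicitly supply the standard equivalence between ``\(\mathbb{E}_{F^{\beta}}V \geq \mathbb{E}_{G^{\beta}}V\) for every value function \(V\)'' and the convex order \(F^{\beta} \succeq G^{\beta}\) (via approximation of convex test functions by maxima of finitely many affine functions), a step the paper leaves implicit, just as it does in Theorem~\ref{theorem1}.
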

\begin{proof}
    Fix arbitrary \(T\), \(\beta\), and \(\mu \in \Delta\left(\Theta\right)\), and observe that
    \[\begin{split}
        W(\pi_1) \geq W(\pi_2) &\Leftrightarrow \sum_{t=1}^T \beta_t\mathbb{E}_{F_t} V(x) \geq \sum_{t=1}^T \beta_t\mathbb{E}_{G_t} V(x)\\
        &\Leftrightarrow \frac{1}{\bar{\beta}}\sum_{t=1}^T \beta_t\mathbb{E}_{F_t} V(x) \geq \frac{1}{\bar{\beta}}\sum_{t=1}^T \beta_t\mathbb{E}_{G_t} V(x)\\
        &\Leftrightarrow \mathbb{E}_{\sum_{t=1}^T \frac{\beta_t}{\bar{\beta}} F_t} V(x) \geq \mathbb{E}_{\sum_{t=1}^T \frac{\beta_t}{\bar{\beta}} G_t} V(x)\\
        &\Leftrightarrow \mathbb{E}_{F^{\beta}} V(x) \geq \mathbb{E}_{G^{\beta}} V(x)\text{,}
    \end{split}\]
    where the third equivalence was by the linearity of expectation and the fourth by the definitions of \(F^{\beta}\) and \(G^{\beta}\). \end{proof}

\begin{corollary}
    In the Discounted class, \(\pi_1 \trianglerighteq \pi_2\) if and only if \(F^{\beta} \succeq G^{\beta}\) for any sequence \(\beta\) whose sum is finite.
\end{corollary}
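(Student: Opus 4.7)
The plan is to deduce the corollary as an essentially immediate consequence of Theorem \ref{discountedtheorem} by noting that the Discounted class is, by definition, the union over finite-sum sequences $\beta$ of the $\beta$-Discounted classes. Thus, $\pi_1 \trianglerighteq \pi_2$ in the Discounted class is logically equivalent to saying that, for every admissible sequence $\beta$, $\pi_1 \trianglerighteq \pi_2$ in the $\beta$-Discounted class. This reframing is the whole content of the argument; no new computation is needed.

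For the forward direction, I would fix an arbitrary sequence $\beta$ whose sum is finite, observe that any $\beta$-Discounted problem is a Discounted problem, and hence $\pi_1 \trianglerighteq \pi_2$ in the Discounted class implies $W(\pi_1) \geq W(\pi_2)$ for every $\beta$-Discounted problem. Applying Theorem \ref{discountedtheorem} to this fixed $\beta$ yields $F^{\beta} \succeq G^{\beta}$. Since $\beta$ was arbitrary, this holds for every admissible sequence. For the reverse direction, I would take any Discounted problem, extract the associated sequence $\beta$ (which has finite sum by definition), and invoke $F^{\beta} \succeq G^{\beta}$ together with Theorem \ref{discountedtheorem} to conclude $W(\pi_1) \geq W(\pi_2)$ for this problem; since the problem was arbitrary in the Discounted class, $\pi_1 \trianglerighteq \pi_2$ follows. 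There is no real obstacle here: the only subtlety worth flagging is that the quantifier "for any sequence $\beta$ whose sum is finite" in the corollary must line up precisely with the set of sequences admissible in the definition of the Discounted class, which it does by construction.
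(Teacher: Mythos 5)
Your proposal is correct and matches the paper's argument: both directions reduce to Theorem \ref{discountedtheorem} via the observation that the Discounted class is the union of the $\beta$-Discounted classes over finite-sum $\beta$. The only cosmetic difference is that the paper phrases the forward direction contrapositively while you argue it directly.
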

\begin{proof} \(\left(\Rightarrow\right)\) Suppose for the sake of contraposition that there exists a finite-sum \(\beta\) for which \(F^{\beta} \not\succeq G^{\beta}\). Then Theorem \ref{discountedtheorem} implies that in the \(\beta\)-Discounted class \(\pi_1 \not\trianglerighteq \pi_2\).

\medskip

\noindent \(\left(\Leftarrow\right)\) Immediate by construction. \end{proof}

\section{Not What But When: Time Lotteries for Information}

Take some (static) non-trivial information structure \(\xi \colon \Theta \to \Delta(Z)\), where \(Z\) is a finite set of signal realizations; \(Z = \left\{z_1,\dots,z_m\right\}\). A special variety of dynamic information structure is one in which the realization of \(\xi\) arrives at a single random time and no information arrives otherwise. In this section, we apply our earlier results to derive two findings. First, in the AS class of dynamic decision problems, we show that (stochastically) earlier arrival of information is better for an agent. Second, in the discounted class of problems, we show that an agent is risk-loving over the arrival time of information.

Let \(h\) be the probability mass function of the information arrival random variable, \(Y\), supported on \(\left\{1,\dots,T\right\}\), where \(T \leq \infty\). That is \(h(y) = \mathbb{P}(Y = y)\) for all \(y \in \left\{1,\dots,T\right\}\). Let \(H\) denote the cdf of \(Y\): \[H(y) \coloneqq \mathbb{P}(Y \leq y) = \sum_{i=1}^{y}h(i)\text{.}\]

Cdf \(P\) first-order stochastically dominates (FOSD) cdf \(H\) if and only if \(H(y) \geq P(y)\) for all \(y \in \left\{1,\dots,T\right\}\). Cdf \(P\) second-order stochastically dominates (SOSD) cdf \(H\) if and only if the information-arrival random variable \(Y_1\) with cdf \(H\) is a mean-preserving spread of information-arrival random variable \(Y_2\) (with cdf \(P\)). That is, there exists a mean-zero random variable \(W\) such that \(Y_1 = Y_2 + W\).

For two (discrete) random variables \(Z\) and \(Y\) supported on a subset of \(\left\{1,\dots,T\right\}\), \(Z\) is a \textcolor{PineGreen}{Binary splitting} of \(Y\) if there exist \(z_1, z_3 \in \supp Z\), \(y_2 \in \supp Y\), and \(\eta_1, \eta_3 \in \mathbb{R}_{++}\) for which
\begin{enumerate}
    \item \(z_1 < y_2 < z_3\);
    \item \(\eta_1 z_1 +  \eta_3 z_3 = \left(\eta_1 + \eta_3\right)y_2\);
    \item  \(\mathbb{P}(Z = z) = \mathbb{P}(Y = z)\) for all \(z \in \supp Z \setminus \left\{z_1, z_3, y_2\right\}\);
    \item \(\mathbb{P}(Z = z_1) = \mathbb{P}(Y = z_1) + \eta_1\) and \(\mathbb{P}(Z = z_3) = \mathbb{P}(Y = z_3) + \eta_3\); and
    \item \(\mathbb{P}(Z = y_2) = \mathbb{P}(Y = y_2) - \eta_1 - \eta_3\).
\end{enumerate}

\begin{remark}\label{binary}
    Information arrival random variable \(Y_1\) is a mean-preserving spread of \(Y_2\) if and only if there exists a sequence of binary splittings \(Y_2, Z_1, \dots, Z_k, Y_1\).
\end{remark}
\begin{proof}
    This is implied by Theorem 2 in \cite*{ROTHSCHILD1970225} and Theorem 1a in \cite*{rasmusen1992defining}.
\end{proof}
We use this result to prove that a \(\beta-\)discounter is risk-loving over the random arrival of information, provided \(\beta\) is a decreasing sequence. Before proving that, we establish that earlier-arriving information is better for additively separable problems.
\begin{proposition}
    \(F_t \succeq G_t\) for every \(t\) if and only if \(P\) FOSD \(H\).
\end{proposition}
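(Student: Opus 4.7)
The plan is to exploit the especially simple mixture structure of the period-$t$ distribution over posteriors under a random-arrival experiment. In period $t$, with probability $1-H(t)$ the realization of $\xi$ has not yet arrived and the agent's belief is still the prior $\mu$; with probability $H(t)$ it has arrived and the belief is drawn from the distribution over posteriors $\Xi \in \Delta(\Delta(\Theta))$ induced by $\xi$. Writing $\mathbf{1}_\mu$ for the point mass at the prior, this gives
\[F_t = \bigl(1-H(t)\bigr)\mathbf{1}_\mu + H(t)\,\Xi, \qquad G_t = \bigl(1-P(t)\bigr)\mathbf{1}_\mu + P(t)\,\Xi,\]
and by Bayes plausibility $\Xi$ has mean $\mu$, so $\Xi \succeq \mathbf{1}_\mu$.

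Next I would note that for every convex $\phi \colon \Delta(\Theta) \to \mathbb{R}$,
\[\mathbb{E}_{F_t}[\phi] - \mathbb{E}_{G_t}[\phi] = \bigl(H(t)-P(t)\bigr)\bigl(\mathbb{E}_\Xi[\phi] - \phi(\mu)\bigr),\]
where the second factor is nonnegative by Jensen. This identity reduces the proposition to a pointwise comparison of $H$ and $P$. For the ``if'' direction, assume $P$ FOSD $H$, so $H(t) \geq P(t)$ at every $t$; the display is then nonnegative for every convex $\phi$, which is the standard characterization of $F_t \succeq G_t$. For the ``only if'' direction, fix $t$ and use the non-triviality of $\xi$, which implies $\Xi \neq \mathbf{1}_\mu$ and hence yields a strictly convex $\phi$ (e.g.\ $\phi(x) = \|x-\mu\|^2$) with $\mathbb{E}_\Xi[\phi] - \phi(\mu) > 0$; applying $F_t \succeq G_t$ to this $\phi$ forces $H(t) \geq P(t)$, and since $t$ was arbitrary, $P$ FOSD $H$.

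The main obstacle is negligible: the proof is essentially a one-line calculation once the mixture decomposition is written down. The only genuine subtlety appears in the reverse direction, where the non-triviality of $\xi$ is needed to guarantee that some convex test function strictly separates $\Xi$ from $\mathbf{1}_\mu$; without this hypothesis $\Xi = \mathbf{1}_\mu$ and every arrival distribution would trivially induce the same $F_t$, making the FOSD conclusion vacuous.
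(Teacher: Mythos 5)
Your proof is correct and rests on the same mixture decomposition the paper uses: both identify $F_t$ and $G_t$ as convex combinations of the point mass at the prior and the static splitting induced by $\xi$, with weights $H(t)$ and $P(t)$, and reduce the convex-order comparison to the pointwise comparison of $H$ and $P$. Your version certifies the mean-preserving-spread relation via convex test functions rather than the paper's direct comparison of probability mass functions, which makes the role of the non-triviality of $\xi$ in the ``only if'' direction slightly more explicit, but the argument is the same.
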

\begin{proof}
Let \(\rho \in \Delta\left(\Delta\left(\Theta\right)\right)\) be the distribution over posteriors produced by static information structure \(\xi\) and denote its support by \(X \coloneqq \left\{x^1, \dots, x^m\right\}\), where each \(x^i \in \Delta\left(\Theta\right)\).

Observe that, for each \(x \in X \cup \left\{\mu\right\}\).
    \[f_t(x) = \begin{cases}
        H(t) \rho(x) \quad &\text{if} \quad x \neq \mu\\
        H(t) \rho(x) + (1-H(t)) &\text{if} \quad x = \mu\text{.}
    \end{cases}\]
    Keep in mind that for each \(t\), \(f_t\) is a probability mass function (pmf): \(f_t(x) = \mathbb{P}(x)\). \(g_t\) and \(G_t\) are defined analogously.

    \noindent \(\left(\Rightarrow\right)\) Suppose for the sake of contraposition that \(P\) does not FOSD \(H\). Then there exists a \(t'\) such that \(P(t') > H(t')\). Then for all \(x \in X \setminus \left\{\mu\right\}\), 
    \[f_{t'}(x) = H(t') \rho(x) < P(t') \rho(x) = g_{t'}(x)\text{,}\]
    so \(F_{t'} \not\succeq G_{t'}\). 

\medskip

\noindent \(\left(\Leftarrow\right)\) Suppose \(P\) FOSD \(H\).Then for all \(t\) and for all \(x \in X \setminus \left\{\mu\right\}\), 
    \[f_{t}(x) = H(t) \rho(x) \geq P(t) \rho(x) = g_{t}(x)\text{,}\]
    so \(F_{t} \succeq G_{t}\) for all \(t\). \end{proof}
This proposition plus Theorem \ref{theorem1} imply
\begin{corollary}
    In the AS class, \textbf{Earlier is Better:} \(\pi_1 \trianglerighteq \pi_2\) if and only if \(P\) (corresponding to \(\pi_2\)) FOSD \(H\).
\end{corollary}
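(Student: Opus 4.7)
The plan is to simply chain the immediately preceding Proposition with Theorem \ref{theorem1}. All of the real work has already been done: Theorem \ref{theorem1} converts dominance in the AS class into a period-by-period mean-preserving-spread statement about the induced posterior distributions, and the Proposition just above translates that period-by-period MPS condition, in the particular time-lottery environment we are now in, into a first-order stochastic dominance comparison of the two arrival-time distributions.

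Concretely, my first step is to invoke Theorem \ref{theorem1}, which gives $\pi_1 \trianglerighteq \pi_2$ in the AS class if and only if $F_t \succeq G_t$ for every $t \in \mathcal{T}$, where $F_t$ is the period-$t$ posterior distribution induced by $\pi_1$ and $G_t$ that induced by $\pi_2$. My second step is to apply the Proposition, which asserts that $F_t \succeq G_t$ for every $t$ if and only if $P$ (the arrival-time cdf associated with $\pi_2$) FOSD $H$ (the arrival-time cdf associated with $\pi_1$). Composing the two biconditionals produces the corollary in a single line.

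There is essentially no obstacle, since the Proposition was engineered exactly to bridge Theorem \ref{theorem1} and the arrival-time comparison. The only point requiring a moment's care is bookkeeping about directions: $\pi_1$ is the dominant signal and corresponds to $H$, while $\pi_2$ is the dominated signal and corresponds to $P$, so $P$ FOSD $H$ encodes the ``earlier is better'' reading, namely that the preferred signal delivers information stochastically sooner. I would present the argument as a two-sentence proof that simply names the two results being composed.
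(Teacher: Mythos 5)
Your proposal is correct and is exactly the paper's argument: the corollary is stated as an immediate consequence of composing the preceding Proposition with Theorem \ref{theorem1}, with the same bookkeeping that \(\pi_1\) corresponds to \(H\) and \(\pi_2\) to \(P\).
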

Now let us turn our attention to discounted problems. The sequence \(\beta = \left(\beta_t\right)_{t=1}^T\) is decreasing if \(\beta_1 \geq \dots \geq \beta_T\). Note that for exponential discounters, this sequence is decreasing.
\begin{proposition}\label{propsosd} Fix \(\mu \in \inter \Delta\left(\Theta\right)\); and let \(T \in \mathbb{N}\) (\(T \geq 3\)) and fix a decreasing sequence \(\beta\). If \(P\) SOSD \(H\), \(F^{\beta} \succeq G^{\beta}\).
\end{proposition}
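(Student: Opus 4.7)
The plan is to reduce the bivariate convex-order claim $F^{\beta}\succeq G^{\beta}$ to a scalar inequality between weighted averages of $H$ and $P$. Using the explicit formula for $f_t$ derived in the preceding proposition, $F_t$ is the mixture $H(t)\rho+(1-H(t))\delta_{\mu}$, where $\rho$ is the posterior distribution induced by the static experiment $\xi$ and $\delta_{\mu}$ denotes the point mass at the prior. Averaging in $t$ with weights $\lambda_{\beta}(t)$ yields
\[F^{\beta}=\bar{H}\,\rho+(1-\bar{H})\,\delta_{\mu},\qquad G^{\beta}=\bar{P}\,\rho+(1-\bar{P})\,\delta_{\mu},\]
where $\bar{H}:=\sum_{t=1}^{T}\lambda_{\beta}(t)H(t)$ and $\bar{P}$ is analogous. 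Both distributions have mean $\mu$ by Bayes-plausibility, so the convex order is the relevant notion of dominance.

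For any convex $\phi\colon\Delta(\Theta)\to\mathbb{R}$ a one-line calculation gives
\[\mathbb{E}_{F^{\beta}}[\phi]-\mathbb{E}_{G^{\beta}}[\phi]=(\bar{H}-\bar{P})\bigl(\mathbb{E}_{\rho}[\phi]-\phi(\mu)\bigr).\]
The second factor is non-negative because $\rho$ is a mean-preserving spread of $\delta_{\mu}$ (equivalently, by Jensen). So $F^{\beta}\succeq G^{\beta}$ will follow once I show the scalar inequality $\bar{H}\geq\bar{P}$.

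To handle that step I would use Abel summation to rewrite
\[\bar{H}=\sum_{t=1}^{T}\lambda_{\beta}(t)\sum_{y=1}^{t}h(y)=\sum_{y=1}^{T}h(y)\Psi(y)=\mathbb{E}[\Psi(Y_{1})],\]
where $\Psi(y):=\sum_{t=y}^{T}\lambda_{\beta}(t)$, and likewise $\bar{P}=\mathbb{E}[\Psi(Y_{2})]$. A direct computation yields $\Psi(y+1)-2\Psi(y)+\Psi(y-1)=(\beta_{y-1}-\beta_{y})/\bar{\beta}\geq 0$, which is precisely where the decreasing-$\beta$ hypothesis is used; so $\Psi$ is convex on $\{1,\dots,T\}$. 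Since $P$ SOSD $H$ means $Y_{1}$ is a mean-preserving spread of $Y_{2}$, Remark \ref{binary} realizes this spread as a finite chain of binary splittings; on each binary splitting, Jensen's inequality for $\Psi$ at the three splitting points (which works because the middle point is the convex combination of the outer two with weights $\eta_{1},\eta_{3}$) gives $\mathbb{E}[\Psi(\cdot)]$ weakly increases along the chain. Therefore $\bar{H}\geq\bar{P}$, and the proof is complete.

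The main obstacle is conceptual rather than technical: recognizing that the correct scalar to track is the expected tail-sum $\mathbb{E}[\Psi(Y)]$, which simultaneously captures the $\beta$-weighting of the MPS hypothesis and inherits convexity from the monotonicity of $\beta$. Once this Abel-summation reformulation is in hand, the remaining steps are routine, and the $T\geq 3$ assumption is there only to ensure that nontrivial binary splittings exist within $\{1,\dots,T\}$.
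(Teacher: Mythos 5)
Your proof is correct, and while it shares the paper's overall skeleton, it handles the key step by a genuinely different argument. Both proofs reduce the claim to the scalar inequality $\sum_{t}\lambda_{\beta}(t)\bigl(H(t)-P(t)\bigr)\geq 0$ (your $\bar{H}\geq\bar{P}$): the paper does this by comparing the pmfs $f^{\beta}$ and $g^{\beta}$ pointwise on $X\setminus\{\mu\}$ and arguing that $G^{\beta}$ contracts mass toward the prior, whereas you do it with a test-function computation giving $\mathbb{E}_{F^{\beta}}[\phi]-\mathbb{E}_{G^{\beta}}[\phi]=(\bar{H}-\bar{P})\bigl(\mathbb{E}_{\rho}[\phi]-\phi(\mu)\bigr)$; your version is tighter, since the paper's ``it suffices that $g^{\beta}(x)=\tau_{x}f^{\beta}(x)$'' step is left somewhat implicit. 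The real divergence is in proving $\bar{H}\geq\bar{P}$. The paper invokes Remark \ref{binary} to reduce to a single binary splitting and then carries out an explicit computation of $H(t)-P(t)$ on the intervals $[a,t^{*})$ and $[t^{*},b)$, combined with a rearrangement argument using the monotonicity of $\lambda_{\beta}$. You instead Abel-sum to get $\bar{H}=\mathbb{E}[\Psi(Y_{1})]$ with $\Psi(y)=\sum_{t\geq y}\lambda_{\beta}(t)$, observe that the discrete second difference of $\Psi$ is $(\beta_{y-1}-\beta_{y})/\bar{\beta}\geq 0$, and conclude directly from $Y_{1}$ being a mean-preserving spread of $Y_{2}$ and the convexity of $\Psi$. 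This is cleaner: it isolates exactly where the decreasing-$\beta$ hypothesis enters (the convexity of the tail-sum function), avoids the case analysis over the splitting's effect on the cdf, and makes the connection to the standard ``risk-loving means convex'' logic transparent; it also barely needs Remark \ref{binary}, since MPS implies higher expectations of convex functions by any standard route. The paper's computation, in exchange, is fully elementary and self-contained. One cosmetic note: your displayed identity for $\bar{H}$ uses $\Psi$ evaluated at the support points of $Y_{1}$, which is exactly the expected discounted weight accruing from the arrival time onward --- worth stating in the write-up, as it gives the economic reading of the inequality.
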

\begin{proof}
By Remark \ref{binary}, we may assume without loss of generality that \(H\) is obtained from \(P\) by a binary splitting. We have
    \[f^{\beta}(x) = \begin{cases}
        \sum_{t=1}^{T}\lambda_{\beta}\left(t\right) H(t) \rho(x) \quad &\text{if} \quad x \neq \mu\\
        \sum_{t=1}^{T}\lambda_{\beta}\left(t\right) (H(t) \rho(x) + (1-H(t))) &\text{if} \quad x = \mu\text{.}
    \end{cases}\]
    This is also a pmf. \(g^{\beta}\) is constructed analogously from \(g_t\).
    
    It suffices to show that for all \(x \in X \setminus \left\{\mu\right\}\), \(g^{\beta}(x) = \tau_x f^{\beta}(x)\) for some collection \(\left(\tau_x\right)_{x \in X}\) where for each \(x \in X\), \(\tau_x \in \left[0,1\right]\). Observe that for each \(x \neq \mu\),
    \[\begin{split}
        f^{\beta}(x) - g^{\beta}(x) &= \sum_{t=1}^{T}\lambda_{\beta}\left(t\right) H(t) \rho(x) - \sum_{t=1}^{T}\lambda_{\beta}\left(t\right) P(t) \rho(x)\\
        &= \sum_{t=1}^{T}\lambda_{\beta}\left(t\right) (H(t)-P(t)) \rho(x)
    \end{split}\]
    Recall that \(H\) is obtained via an arbitrary binary splitting of \(P\). Let \(h\) and \(p\) denote their respective pmfs. We pick a \(t^* \in \left\{2,\dots,T-1\right\}\) and subtract \(k_{t^*} \in \left(0,p(t^*)\right]\) from \(p(t^*)\). We add \(k_a > 0\) to some \(a < t^*\) and \(k_b > 0\) to some \(b > t^*\), where \(t^* k_{t^*} = a k_a + b k_b\) and \(k_{t^*} = k_a + k_b\). Then, for all \(t < a\), \(H(t) = P(t)\); for \(a \leq t < t^*\), \(H(t) = P(t) + k_a\); for \(t^* \leq t < b\),  \(H(t) =  P(t) + k_a - k_{t^*}\) and for \(t^* \geq b\) \(H(t) = P(t)\).

    Consequently,
    \[\sum_{t=1}^{T}\lambda_{\beta}\left(t\right) (H(t)-P(t)) = \sum_{t=a}^{t^{*}-1}\lambda_{\beta}\left(t\right) k_a + \sum_{t=t^{*}}^{b-1}\lambda_{\beta}\left(t\right) (k_a-k_{t^*})\text{.}\]
    Furthermore, as \(k_{t^*} > k_a > 0\) and \(\lambda_\beta\) is decreasing in \(t\), \(\sum_{t=a}^{t^{*}-1} k_a + \sum_{t=t^{*}}^{b-1} (k_a-k_{t^*}) \geq 0\) implies \(\sum_{t=a}^{t^{*}-1}\lambda_{\beta}\left(t\right) k_a + \sum_{t=t^{*}}^{b-1}\lambda_{\beta}\left(t\right) (k_a-k_{t^*}) \geq 0\). Finally, 
    \[\begin{split}
        \sum_{t=a}^{t^{*}-1} k_a + \sum_{t=t^{*}}^{b-1} (k_a-k_{t^*}) &= (t^{*}-a)k_a + (b-t^{*})(k_a-k_{t^*}) =  b (k_b + k_a -k_{t^*}) = 0\text{.}
    \end{split}\]
    We conclude the result. \end{proof}
    This proposition plus Theorem \ref{discountedtheorem} imply
    \begin{corollary}
        In the \(\beta-\)Discounted class, if \(\beta\) is a decreasing sequence, \textbf{The DM is Risk-Loving Over Information Arrival:} if \(P\) SOSD \(H\), \(\pi_1 \trianglerighteq \pi_2\).
    \end{corollary}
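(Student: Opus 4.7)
The plan is a direct two-step composition of the two results immediately preceding the statement. Since the corollary is explicitly introduced as a consequence of Proposition \ref{propsosd} and Theorem \ref{discountedtheorem}, my proof proposal is simply to chain them: translate the SOSD hypothesis on arrival times into a mean-preserving spread comparison of the weighted posterior distributions, and then translate that MPS comparison into the dominance conclusion.

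In detail, I would proceed as follows. Fix the hypotheses: $\beta$ is decreasing and $P$ SOSD $H$. These match precisely the hypotheses of Proposition \ref{propsosd}, so invoking it gives $F^{\beta} \succeq G^{\beta}$, i.e., the $\lambda_\beta$-averaged distribution over posteriors induced by $\pi_1$ is a mean-preserving spread of the one induced by $\pi_2$. I would then invoke the sufficiency direction of Theorem \ref{discountedtheorem}, which states that in the $\beta$-Discounted class, $\pi_1 \trianglerighteq \pi_2$ is equivalent to $F^{\beta} \succeq G^{\beta}$, to conclude $\pi_1 \trianglerighteq \pi_2$. This completes the argument in two lines.

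There is no substantive obstacle, since all the content has been bundled into the two preceding results. The only points worth noting are (i) that Proposition \ref{propsosd} carries side assumptions (finite $T \geq 3$ and interior prior $\mu$), which the corollary implicitly inherits and which I would state explicitly for clarity; and (ii) that the logical direction is one-way: SOSD of arrival times implies dominance, but the converse is not claimed here, consistent with the one-directional nature of Proposition \ref{propsosd}. If one wished to sharpen the statement to an ``if and only if,'' one would need to reverse-engineer a decision problem that separates non-SOSD pairs—beyond the scope of what is needed to prove the stated corollary.
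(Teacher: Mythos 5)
Your proposal is correct and matches the paper exactly: the corollary is stated as an immediate consequence of Proposition \ref{propsosd} (which converts the SOSD hypothesis into \(F^{\beta} \succeq G^{\beta}\)) chained with Theorem \ref{discountedtheorem} (which converts that into \(\pi_1 \trianglerighteq \pi_2\)). Your remarks about the inherited side assumptions and the one-directional nature of the implication are accurate and, if anything, slightly more careful than the paper's own presentation.
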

    and
    \begin{corollary}
        In the \(\delta-\)Discounted class, \textbf{The DM is Risk-Loving Over Information Arrival.}
    \end{corollary}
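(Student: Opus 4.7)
The plan is to reduce this corollary directly to the previous one by observing that $\delta$-discounting is a special case of $\beta$-discounting with a monotone sequence. Specifically, in the $\delta$-Discounted class we have $\beta_t = \delta^{t-1}$ with $\delta \in [0,1]$ (or $\delta \in [0,1)$ when $T = \infty$, to ensure $\bar\beta$ is finite). In every case $\delta \le 1$, so $\beta_{t+1}/\beta_t = \delta \le 1$, i.e., the sequence $\beta$ is (weakly) decreasing. Once this is observed, the immediately preceding corollary applies verbatim: if $P$ SOSD $H$, then $\pi_1 \trianglerighteq \pi_2$ in the $\beta$-Discounted class, which in this case is the $\delta$-Discounted class.

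Concretely, I would write: fix any $\delta \in [0,1]$ (resp.\ $[0,1)$ if $T = \infty$) and set $\beta_t = \delta^{t-1}$. Since $\beta$ is decreasing, Proposition~\ref{propsosd} yields $F^{\beta} \succeq G^{\beta}$ whenever $P$ SOSD $H$, and then Theorem~\ref{discountedtheorem} gives $\pi_1 \trianglerighteq \pi_2$ in the $\beta$-Discounted, hence the $\delta$-Discounted, class.

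The only genuine subtlety is the case $T = \infty$, where Proposition~\ref{propsosd} is stated for $T \in \mathbb{N}$ with $T \ge 3$. Here I would either (i) note that an MPS between information-arrival laws supported on $\{1,2,\dots\}$ can be realized by a countable sequence of binary splittings, and the inequality $\sum_t \lambda_\beta(t)(H(t)-P(t))\rho(x) \ge 0$ for each $x\ne \mu$ still goes through because $\lambda_\beta$ remains weakly decreasing in $t$; or (ii) truncate at large $N$, apply the finite-$T$ result, and take $N \to \infty$ using the summability $\sum_{t=1}^\infty \delta^{t-1} < \infty$ to pass the convex-order inequality to the limit. I expect option (i) to be the cleaner route and to be the main (though mild) obstacle — everything else is a one-line invocation of the previous corollary.
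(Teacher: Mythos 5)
Your proposal is correct and matches the paper's (implicit) argument exactly: the paper notes just before Proposition~\ref{propsosd} that the exponential sequence \(\beta_t = \delta^{t-1}\) is decreasing, and the corollary is then an immediate instance of the preceding \(\beta\)-discounted corollary. Your additional care about the \(T=\infty\) case goes beyond what the paper addresses (Proposition~\ref{propsosd} is only stated for finite \(T\geq 3\)), but it is a reasonable supplement rather than a different route.
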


    It has been observed that an exponential discounter must be risk-loving over the random arrival of a monetary prize (\cite*{chesson2003commonalities}, \cite*{chen2013effect}, and \cite*{dejarnette2020time}). Here, we show that an exponential discounter must be risk-loving over the random arrival of information, no matter what (non-trivial) information it is. Importantly, note that these two findings are not the same: unlike a prize realized from a lottery, whose benefit accrues only in the period it is obtained--or, at the very least, if the agent can save, can be spread over the periods following its arrival--once an agent has acquired information, it is helpful from then on.

    Moreover, Proposition \ref{propsosd} obtains irrespective of the shape of the decreasing sequence \(\beta\). It is tempting, upon observing the result for \(\delta\)-discounters for whom \(\lambda(t)\) is a convex function, to deduce that this risk-loving result is a consequence of this convexity. Not so: all that is needed is that the agent values the future less than the present. Indeed it is easy to construct an example in which the agent is strictly hurt by a binary splitting for some increasing sequence \(\beta\), even if it is convex (in \(t\)).

\bibliography{sample.bib}

\end{document}